\newtheorem{theorem}{Theorem}
\newtheorem{fact}[theorem]{Fact}
\newtheorem{lemma}[theorem]{Lemma}
\newcommand{\comment}[1]{}
\begin{document}
\title{On ultrametric $1$-median selection}

\author{
Ching-Lueh Chang \footnote{Department of Computer Science and
Engineering,
Yuan Ze University, Taoyuan, Taiwan. Email:
clchang@saturn.yzu.edu.tw}
}


\maketitle

\begin{abstract}
Consider the problem of finding a point in an ultrametric space
with the minimum average distance to all points.
We give this problem a Monte Carlo $O((\log^2(1/\epsilon))/\epsilon^3)$-time
$(1+\epsilon)$-approximation algorithm for all $\epsilon>0$.
\end{abstract}

\section{Introduction}

A metric space is a
nonempty set $M$ endowed with a distance function $d\colon M\times M\to[0,\infty)$
satisfying
\begin{itemize}
\item $d(x,y)=0$ if and only if $x=y$,
\item $d(x,y)=d(y,x)$, and
\item $d(x,z)\le d(x,y)+d(y,z)$ (triangle inequality)
\end{itemize}
for all $x$, $y$, $z\in M$.
With the triangle inequality strengthened to
$$
d\left(x,z\right)\le \max\left\{d\left(x,y\right),\,d\left(y,z\right)\right\},
$$
we call
$(M,d)$
an ultrametric space and $d$ an ultrametric (a.k.a.\
non-Archimedean metric or super-metric).
The mathematical community studies ultrametrics extensively.

Given an $n$-point metric space $(M,d)$,
{\sc metric $1$-median} asks for a point in $M$, called
a $1$-median, with the minimum average distance
to all points.
{\sc Metric $1$-median} is a special case of the classical $k$-median
clustering
and a generalization to the classical median selection~\cite{CLRS09}.
It can also be interpreted as finding the most important point because social
network analysis often measures the importance of an actor $v$ by $v$'s closeness
centrality, defined to be $v$'s average
distance to all points~\cite{WF94}.
Not surprisingly, {\sc metric $1$-median} is extensively studied, e.g.,
in the general~\cite{Ind99,Ind00}, Euclidean~\cite{KSS10}, streaming~\cite{GMMMO03}
and deterministic~\cite{Cha18}
cases.
Indyk~\cite{Ind99,Ind00} has the currently best upper bound for {\sc metric $1$-median}:

\begin{theorem}[\cite{Ind99,Ind00}]\label{greattheoremofIndyk}
{\sc Metric $1$-median} has a Monte Carlo $O(n/\epsilon^2)$-time
$(1+\epsilon)$-approximation algorithm for all $\epsilon>0$.
\end{theorem}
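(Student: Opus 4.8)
\noindent We outline a plan for Theorem~\ref{greattheoremofIndyk}. Write $\bar c(x)=\frac1n\sum_{y\in M}d(x,y)$ for the normalized cost, fix an optimum $x^*$ with $r:=\bar c(x^*)=\min_x\bar c(x)$, and assume $\epsilon\in(0,1)$ (run with $\epsilon=1/2$ if $\epsilon\ge1$; if $n=O(1/\epsilon^2)$, compute every $\bar c(x)$ exactly in $O(n^2)=O(n/\epsilon^2)$ time). The algorithm I would study is the obvious sampling algorithm: draw a uniform random multiset $S=\{Y_1,\dots,Y_t\}\subseteq M$ with $t=\Theta(1/\epsilon^2)$, compute $\mathrm{cost}_S(x)=\sum_{i=1}^t d(x,Y_i)$ for every $x\in M$, and output a point $\hat x$ minimizing $\mathrm{cost}_S$. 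This queries $O(nt)$ distances and runs in $O(n/\epsilon^2)$ time; the content is in the analysis.

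The analysis rests on two elementary facts. First, averaging the triangle inequality $d(a,b)\le d(a,y)+d(y,b)$ over $y\in M$ gives $d(a,b)\le \bar c(a)+\bar c(b)$ for all $a,b\in M$; hence a uniform random point has normalized cost $\le 2r$ in expectation, so computing $\bar c(p)$ exactly for $O(1)$ random points $p$ (extra cost $O(n)$) and keeping the best yields, with probability $1-2^{-\Omega(1)}$, an \emph{anchor} $p$ with $\bar c(p)=O(r)$. Second, $d(x,Y_i)\ge d(x,p)-d(p,Y_i)$ gives $\mathrm{cost}_S(x)\ge t\,d(x,p)-\mathrm{cost}_S(p)$ for every $x$; combining this with $\mathrm{cost}_S(\hat x)\le\mathrm{cost}_S(x^*)$ yields $d(\hat x,p)\le\frac1t\,(\mathrm{cost}_S(x^*)+\mathrm{cost}_S(p))$, a random variable of expectation $r+\bar c(p)=O(r)$ (given a good $p$). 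So, with good probability, $\hat x$ lies in a ball $B=B(p,\rho)$ of radius $\rho=O(r)$ around $p$; and $x^*\in B$ too, since $d(x^*,p)\le r+\bar c(p)=O(r)$. The plan is to reduce the whole question to candidates inside $B$.

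Localization to $B$ is what makes the estimates cheap, via a bounded-difference trick: for $x\in B$ the summands $d(x,Y_i)-d(p,Y_i)$ lie in $[-d(x,p),d(x,p)]\subseteq[-\rho,\rho]$, a range of only $O(r)$, even though the raw distances $d(x,Y_i)$ may be huge. Thus, with $A(x):=\frac1t\sum_{i=1}^t(d(x,Y_i)-d(p,Y_i))$ (a quantity that, like $\mathrm{cost}_S$, is minimized at $\hat x$, and has $\mathbb E[A(x)]=\bar c(x)-\bar c(p)$), Hoeffding's inequality shows that $t=\Theta(1/\epsilon^2)$ samples pin $A(x)$ to within additive $\epsilon r$ of $\bar c(x)-\bar c(p)$ for any \emph{fixed} $x\in B$. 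Applying this to $x^*$ gives $A(\hat x)=\min_xA(x)\le A(x^*)\le (r-\bar c(p))+\epsilon r$; if in addition $A(\hat x)$ is within $\epsilon r$ of $\bar c(\hat x)-\bar c(p)$, then $\bar c(\hat x)\le(1+2\epsilon)r$, and rescaling $\epsilon$ finishes.

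The hard part is exactly that last clause: because $\hat x$ is data-dependent, to be sure its own sample cost does not badly underestimate its true cost one wants the Hoeffding estimate to hold \emph{simultaneously} for all candidates in $B$, and a radius-$O(r)$ ball in a general metric space can contain all $n$ points (for example, $n$ points at mutual distance $\approx r$). A naive union bound therefore inflates $t$ to $\Theta(\epsilon^{-2}\log n)$ and the running time to $O((n/\epsilon^2)\log n)$ --- the wrong bound. Removing this logarithm is, I expect, the crux of Indyk's argument. Rather than union-bounding the lower tail over every candidate, the idea I would pursue is that a candidate $x$ whose empirical cost underestimates its true cost by more than $\epsilon r$ is \emph{either} already suboptimal by a crude, $\epsilon$-independent margin (so it can never be the selected $\hat x$) \emph{or} has exponentially small failure probability because the one-sided fluctuation of $A(x)$ is itself controlled by $\bar c(x)$ --- and to make this precise by bounding $\mathbb E[\bar c(\hat x)]$ directly and controlling only the single maximal deviation that actually matters. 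Everything else --- the constant-factor anchor $p$, the localization of $\hat x$ to $B$, and the Hoeffding estimate for the fixed point $x^*$ --- is routine.
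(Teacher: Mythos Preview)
The paper does not prove Theorem~\ref{greattheoremofIndyk}; it is quoted as a known result of Indyk and used only as background. The only piece of Indyk's analysis that the paper actually imports is Fact~\ref{Indykkeyfact}, which is invoked as a black box in Lemma~\ref{wemaytakethebestaccordingtothesamples}. So there is no ``paper's own proof'' to compare your proposal against.

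On the substance of your outline: the first two thirds are sound and standard --- the anchor point with $\bar c(p)=O(r)$, the localization of $\hat x$ to a ball of radius $O(r)$ around $p$, and the centered-difference trick that makes the Hoeffding range $O(r)$ rather than unbounded are exactly the right preprocessing. You also correctly identify the actual obstacle: a naive union bound over all candidates in the ball costs a $\log n$ factor, giving $O((n/\epsilon^2)\log n)$ rather than the claimed $O(n/\epsilon^2)$.

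But you do not close that gap. Your final paragraph only sketches an intention (``bound $\mathbb{E}[\bar c(\hat x)]$ directly and control only the single maximal deviation that actually matters'') without carrying it out, and the heuristic you offer --- that a badly underestimating candidate is either crudely suboptimal or has fluctuation controlled by its own $\bar c$ --- is not, as stated, enough to avoid a union bound over $n$ events. Since removing the logarithm is precisely the content of the theorem beyond the easy $O((n\log n)/\epsilon^2)$ bound, the proposal as written is an outline of the \emph{setup} but not of the \emph{proof}. If you want a complete argument you will need to either supply the missing step or consult Indyk's paper for it; the present paper will not help, as it never reproves the $O(n/\epsilon^2)$ bound.
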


The greatest
strengths
of Theorem~\ref{greattheoremofIndyk} are the {\em sublinear}
time complexity (of $O(n/\epsilon^2)$) and the optimal approximation ratio (of $1+\epsilon$),
where ``sublinear'' means ``$o(n^2)$'' by convention because there are $\Theta(n^2)$
distances.
Furthermore, except
for the dependence of the time complexity on $\epsilon$,
all parameters in Theorem~\ref{greattheoremofIndyk} are
easily shown
to be
optimal~\cite[Sec.~7]{Cha12}.

Chang~\cite[Sec.~6]{Cha12} uses Indyk's~\cite[Sec.~6.1]{Ind00} technique to give
a Monte Carlo
algorithm for {\sc metric $1$-median}
with time complexity
{\em independent} of $n$ but at the cost of a worse approximation ratio:

\begin{theorem}[{\cite[Sec.~6]{Cha12}}]\label{Changconstanttime}
For all $\epsilon>0$,
{\sc metric $1$-median} has a Monte Carlo $O((\log^2 (1/\epsilon))/\epsilon^3)$-time
$(2+\epsilon)$-approximation algorithm
with success probability greater than $1-\epsilon$.
\end{theorem}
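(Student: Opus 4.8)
The plan is to use the two‑level sampling idea underlying Indyk's technique. Assume $\epsilon\le 1/2$ (larger $\epsilon$ reduce to $\epsilon=1/2$). The algorithm draws, uniformly and independently, a set $S$ of $|S|=\Theta(\epsilon^{-1}\log(1/\epsilon))$ \emph{candidates} and a set $T$ of $|T|=\Theta(\epsilon^{-2}\log(1/\epsilon))$ \emph{witnesses}, forms for each $s\in S$ the estimate $\widehat{f}(s):=|T|^{-1}\sum_{t\in T}d(s,t)$, and returns $\widehat{s}:=\arg\min_{s\in S}\widehat{f}(s)$. This reads $|S|\cdot|T|=\Theta((\log^{2}(1/\epsilon))/\epsilon^{3})$ distances with $O(1)$ work each, giving the stated running time, so the whole burden is correctness.

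Write $v$ for a $1$‑median and $\mathrm{OPT}=n^{-1}\sum_{u}d(v,u)$. First I would record two deterministic consequences of the triangle inequality: $d(s,v)-\mathrm{OPT}\le\mathrm{cost}(s)\le d(s,v)+\mathrm{OPT}$ for every point $s$, so $\lvert\mathrm{cost}(s)-d(s,v)\rvert\le\mathrm{OPT}$. Next, by Markov's inequality a uniformly random point $p$ has $d(p,v)\le(1+\epsilon/8)\mathrm{OPT}$ with probability $\ge\frac{\epsilon/8}{1+\epsilon/8}=\Omega(\epsilon)$; hence, for the stated size of $S$, with probability $\ge 1-\epsilon/4$ there is a \emph{good} candidate $s^{\dagger}\in S$ with $d(s^{\dagger},v)\le(1+\epsilon/8)\mathrm{OPT}$, whence $\mathrm{cost}(s^{\dagger})\le(2+\epsilon/8)\mathrm{OPT}$ — this is the source of the ratio $2$. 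Since $\widehat{f}(\widehat{s})\le\widehat{f}(s^{\dagger})$ by the choice of $\widehat{s}$, it then suffices to show that, with probability $\ge 1-3\epsilon/4$, every \emph{bad} candidate $s$ (one with $\mathrm{cost}(s)>(2+\epsilon)\mathrm{OPT}$) has $\widehat{f}(s)>\widehat{f}(s^{\dagger})$.

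The crucial difficulty is that $\widehat{f}$ is \emph{not} a trustworthy estimate of $\mathrm{cost}$: distances are unbounded, so one witness far from $v$ inflates $\widehat{f}(s^{\dagger})$, and the lower tail of the nonnegative average $\widehat{f}(s)$ is not controlled by its variance — so no argument comparing $\widehat{f}(s)$ with $\widehat{f}(s^{\dagger})$ through absolute bounds can work. The fix I would use is to manipulate only the \emph{difference} $\widehat{f}(s)-\widehat{f}(s^{\dagger})=|T|^{-1}\sum_{t\in T}\bigl(d(s,t)-d(s^{\dagger},t)\bigr)$, each summand of which lies in $[-d(s,s^{\dagger}),\,d(s,s^{\dagger})]$ and has mean $\mathrm{cost}(s)-\mathrm{cost}(s^{\dagger})$. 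Hoeffding's inequality then certifies $\widehat{f}(s)-\widehat{f}(s^{\dagger})>0$ with probability $\ge 1-\epsilon/(4\lvert S\rvert)$ once $\lvert T\rvert=\Omega\bigl(\epsilon^{-2}\log(\lvert S\rvert/\epsilon)\bigr)=\Omega(\epsilon^{-2}\log(1/\epsilon))$, \emph{provided} $\mathrm{cost}(s)-\mathrm{cost}(s^{\dagger})$ dominates $\epsilon\cdot d(s,s^{\dagger})$. This holds for (i) bad candidates with $\mathrm{cost}(s)\le 100\,\mathrm{OPT}$, where $\lvert\mathrm{cost}(s)-d(s,v)\rvert\le\mathrm{OPT}$ forces $d(s,s^{\dagger})=O(\mathrm{OPT})$ against a gap $\Omega(\epsilon\,\mathrm{OPT})$; and (ii) bad candidates with $\mathrm{cost}(s)>100\,\mathrm{OPT}$ and $d(s,v)\le 20\,\mathrm{OPT}/\epsilon$, where $d(s,s^{\dagger})=O(\mathrm{OPT}/\epsilon)$ is offset by the now much larger gap $\Omega(\mathrm{OPT})$; a union bound over $S$ settles both (this is where the $\log(1/\epsilon)$ factor in $\lvert T\rvert$ is spent). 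The remaining bad candidates have $d(s,v)>20\,\mathrm{OPT}/\epsilon$; here I would argue crudely from $\widehat{f}(s)\ge d(s,v)-\widehat{o}$ and $\widehat{f}(s^{\dagger})\le 2\,\mathrm{OPT}+\widehat{o}$, where $\widehat{o}:=\lvert T\rvert^{-1}\sum_{t\in T}d(v,t)$ satisfies $\widehat{o}\le 8\,\mathrm{OPT}/\epsilon$ with probability $\ge 1-\epsilon/8$ by Markov, on which event $\widehat{f}(s)-\widehat{f}(s^{\dagger})\ge d(s,v)-2\widehat{o}-2\,\mathrm{OPT}>0$.

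I expect the crux — and the reason the ratio is $2+\epsilon$ rather than $1+\epsilon$ — to be exactly this orchestration: neither $\widehat{f}(s^{\dagger})$ nor $\widehat{o}$ is $O(\mathrm{OPT})$ with high probability, which forces the whole comparison through differences of estimates and forces the three‑way case split above — near‑minimal‑excess bad candidates must be pinned to within $O(\mathrm{OPT})$ of the optimum so Hoeffding can afford range $O(\mathrm{OPT})$; higher‑cost bad candidates can afford range $O(\mathrm{OPT}/\epsilon)$ precisely because their gap has grown to $\Omega(\mathrm{OPT})$; and only the genuinely distant candidates lean on the weak Markov bound on $\widehat{o}$ — all while using a single witness set of size $\Theta(\epsilon^{-2}\log(1/\epsilon))$ and keeping the total failure probability below $\epsilon$.
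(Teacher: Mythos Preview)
Your proposal is correct, and the two-level sampling skeleton (draw $h=\Theta(\epsilon^{-1}\log(1/\epsilon))$ candidates, $k=\Theta(\epsilon^{-2}\log(1/\epsilon))$ witnesses, return the candidate minimizing the empirical sum) matches the paper exactly, as does your Markov argument that some candidate $s^{\dagger}$ satisfies $d(s^{\dagger},v)\le(1+O(\epsilon))\,\mathrm{OPT}$ and hence $\mathrm{cost}(s^{\dagger})\le(2+O(\epsilon))\,\mathrm{OPT}$.

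Where you diverge is in the selection analysis. The paper does not argue through a three-way case split; it invokes Indyk's comparison lemma (stated here as Fact~\ref{Indykkeyfact}): whenever $\mathrm{cost}(b)>(1+\epsilon)\,\mathrm{cost}(a)$, one has $\Pr[\sum_j d(b,\boldsymbol{v}_j)\le\sum_j d(a,\boldsymbol{v}_j)]<\exp(-\epsilon^2 k/64)$, \emph{uniformly} in $a,b$. The single observation that drives this is
\[
d(a,b)\;\le\;\frac{1}{n}\sum_{j}\bigl(d(a,p_j)+d(b,p_j)\bigr)\;=\;\mathrm{cost}(a)+\mathrm{cost}(b),
\]
so the Hoeffding range $2\,d(a,b)$ is automatically at most $2(\mathrm{cost}(a)+\mathrm{cost}(b))$, and the ratio $(\mathrm{cost}(b)-\mathrm{cost}(a))/d(a,b)\ge(\mathrm{cost}(b)-\mathrm{cost}(a))/(\mathrm{cost}(b)+\mathrm{cost}(a))\ge \epsilon/(2+\epsilon)$ falls out in one line. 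With this in hand, the paper's Lemma~\ref{wemaytakethebestaccordingtothesamples} disposes of the selection step by one union bound over the $h$ candidates, with no need to separate near-bad, far-bad, and very-far candidates or to introduce the auxiliary Markov bound on $\widehat{o}$.

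In short: your argument works, but your three regimes and the crude $\widehat{o}\le 8\,\mathrm{OPT}/\epsilon$ step are substitutes for the one inequality $d(a,b)\le\mathrm{cost}(a)+\mathrm{cost}(b)$ that you did not exploit. Using it collapses your analysis to the paper's much shorter one.
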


Let {\sc ultrametric $1$-median} be {\sc metric $1$-median} restricted to
ultrametric spaces.
The
approximation ratio of $2+\epsilon$ in Theorem~\ref{Changconstanttime}
cannot be improved to $2-\epsilon$ even if we require the success
probability only to be a small constant~{\cite[Sec.~7]{Cha12}}.
In contrast,
this paper gives
a Monte Carlo $O((\log^2 (1/\epsilon))/\epsilon^3)$-time
$(1+\epsilon)$-approximation algorithm for {\sc ultrametric $1$-median}.
So our algorithm has the optimal approximation ratio (of $1+\epsilon$) and a time complexity
(of $O((\log^2 (1/\epsilon))/\epsilon^3)$)
independent of $n$.

\section{Algorithm}

For all $n\in\mathbb{Z}^+$,
$[n]\stackrel{\text{def.}}{=}\{1,2,\ldots,n\}$
by convention.
Let $([n],d)$ be an ultrametric space, OPT a $1$-median of $([n],d)$
and
$\epsilon>0$.
Order the points in
$[n]$
as
$p_1=\text{\rm OPT}$,
$p_2$,
$\ldots$,
$p_n$
so that
\begin{eqnarray}
0=d\left(\text{OPT},p_1\right)
\le d\left(\text{OPT},p_2\right)
\le \cdots
\le
d\left(\text{OPT},p_n\right).
\label{orderofincreasingdistances}
\end{eqnarray}
Furthermore,
let
\begin{eqnarray}
r^*\stackrel{\text{def.}}{=}\frac{1}{n}\cdot
\sum_{i=1}^n\,
d\left(\text{\rm OPT},p_i\right)
\label{closenesscentralityofthe1median}
\end{eqnarray}
be the average distance from a $1$-median to all points.
Because the brute-force algorithm for {\sc ultrametric $1$-median}
takes $\Theta(n^2)$ time and we want an
$O((\log^2 (1/\epsilon))/\epsilon^3)$-time algorithm,
assume $\epsilon\ge n^{-2/3}$ W.L.O.G.
Furthermore, assume $\epsilon\le 0.0001$ W.L.O.G.\footnote{It is easy to see that
if
our result
holds
when
$\epsilon=0.0001$, then it also holds for all $\epsilon> 0.0001$.}
\comment{ 
The following fact is well-known.

\begin{fact}\label{isoscelesproperty}
For all $x$, $y$, $z\in[n]$ satisfying $d(x,y)\le d(x,z)$,
$$d\left(x,y\right)\le d\left(y,z\right)=d\left(x,z\right).$$
\end{fact}
} 

\comment{ 
\begin{lemma}\label{lowerboundforoptimal}
For all
$1\le\ell\le n$,
$$
\sum_{j=1}^n\,d\left(\text{\rm OPT},p_j\right)
\ge \left(n-\ell+1\right)\cdot d\left(\text{\rm OPT},p_\ell\right)
$$
\end{lemma}
\begin{proof}
Clearly,
$$
\sum_{j=1}^n\,d\left(\text{\rm OPT},p_j\right)
\ge \sum_{j=\ell}^{n}\,d\left(\text{\rm OPT},p_j\right)
\stackrel{\text{(\ref{orderofincreasingdistances})}}{\ge}
\sum_{j=\ell}^{n}\,d\left(\text{\rm OPT},p_\ell\right)
$$
\end{proof}
}

\comment{ 
\begin{lemma}\label{tocloserpoints}
For all
$1\le i<\ell\le n$,
$$
d\left(p_\ell,p_i\right)\le d\left(\text{\rm OPT},p_\ell\right).
$$
\end{lemma}
\begin{proof}
We have
$$
d\left(p_\ell,p_i\right)\le
\max\left\{d\left(\text{OPT},p_\ell\right),\,
d\left(\text{OPT},p_i\right)
\right\}
\stackrel{\text{(\ref{orderofincreasingdistances})}}{=}
d\left(\text{OPT},p_\ell\right).
$$
\end{proof}

\begin{lemma}\label{tofurtherpoints}
For all
$1\le \ell <i\le n$,
$$
d\left(p_\ell,p_i\right)=d\left(\text{\rm OPT},p_i\right).
$$
\end{lemma}
\begin{proof}
We have
$$
d\left(p_\ell,p_i\right)\le
\max\left\{d\left(\text{OPT},p_\ell\right),\,
d\left(\text{OPT},p_i\right)
\right\}
\stackrel{\text{(\ref{orderofincreasingdistances})}}{=}
d\left(\text{OPT},p_i\right).
$$
\end{proof}
}

\begin{lemma}\label{nowthisshouldbethekeylemma}
For all
$1\le\ell\le n$,
$$
\sum_{i=1}^n\,d\left(p_\ell,p_i\right)
\le
\left(1+\frac{\ell-1}{n-\ell+1}\right)
\sum_{i=1}^n\,d\left(\text{\rm OPT},p_i\right).
$$
\end{lemma}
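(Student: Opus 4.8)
The plan is to split the sum $\sum_{i=1}^n d(p_\ell,p_i)$ according to whether $p_i$ is one of the $\ell-1$ points closest to OPT (indices $i<\ell$) or one of the $n-\ell+1$ points farthest from OPT (indices $i\ge\ell$), bound the two pieces separately using the strong triangle inequality together with the ordering~(\ref{orderofincreasingdistances}), and finally control the error term that arises from the ``close'' points by the same ordering.

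First I would record the two relevant consequences of the ultrametric inequality. For $1\le i<\ell$ we have $d(\text{OPT},p_i)\le d(\text{OPT},p_\ell)$ by~(\ref{orderofincreasingdistances}), so
$$
d\left(p_\ell,p_i\right)\le\max\left\{d\left(\text{OPT},p_\ell\right),\,d\left(\text{OPT},p_i\right)\right\}=d\left(\text{OPT},p_\ell\right).
$$
For $\ell\le i\le n$, either $i=\ell$, in which case $d(p_\ell,p_i)=0\le d(\text{OPT},p_i)$, or $i>\ell$, in which case $d(\text{OPT},p_\ell)\le d(\text{OPT},p_i)$ gives $d(p_\ell,p_i)\le\max\{d(\text{OPT},p_\ell),d(\text{OPT},p_i)\}=d(\text{OPT},p_i)$. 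Summing these bounds over all $i$ yields
$$
\sum_{i=1}^n\,d\left(p_\ell,p_i\right)\le\left(\ell-1\right)d\left(\text{OPT},p_\ell\right)+\sum_{i=\ell}^n\,d\left(\text{OPT},p_i\right)\le\left(\ell-1\right)d\left(\text{OPT},p_\ell\right)+\sum_{i=1}^n\,d\left(\text{OPT},p_i\right).
$$

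It then remains to bound $(\ell-1)\,d(\text{OPT},p_\ell)$. Since~(\ref{orderofincreasingdistances}) makes $d(\text{OPT},p_\ell)$ the smallest of $d(\text{OPT},p_\ell),\ldots,d(\text{OPT},p_n)$, we get $(n-\ell+1)\,d(\text{OPT},p_\ell)\le\sum_{i=\ell}^n d(\text{OPT},p_i)\le\sum_{i=1}^n d(\text{OPT},p_i)$, hence $d(\text{OPT},p_\ell)\le\frac{1}{n-\ell+1}\sum_{i=1}^n d(\text{OPT},p_i)$. Substituting this into the previous display gives exactly the claimed inequality. I do not expect a genuine obstacle here: the argument is elementary, and the only points requiring care are the bookkeeping of which of the two ultrametric bounds applies on each side of the index $\ell$, the harmlessness of the $i=\ell$ term, and the edge cases $\ell=1$ (where the bound becomes an equality) and $\ell=n$ (where $n-\ell+1=1$).
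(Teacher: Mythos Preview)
Your argument is correct and essentially identical to the paper's own proof: both split at the index $\ell$, apply the ultrametric inequality with the ordering~(\ref{orderofincreasingdistances}) to obtain $(\ell-1)\,d(\text{OPT},p_\ell)+\sum_{i=1}^n d(\text{OPT},p_i)$, and then bound $d(\text{OPT},p_\ell)$ by $\frac{1}{n-\ell+1}\sum_{i=1}^n d(\text{OPT},p_i)$ via the same averaging over $i\ge\ell$.
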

\begin{proof}
We have
\begin{eqnarray*}
&&\sum_{i=1}^n\,d\left(p_\ell,p_i\right)\\
&=&\sum_{i=1}^{\ell-1}\,d\left(p_\ell,p_i\right)
+\sum_{i=\ell+1}^{n}\,d\left(p_\ell,p_i\right)\\
&\le&\sum_{i=1}^{\ell-1}\,\max\left\{d\left(\text{OPT},p_\ell\right),\,
d\left(\text{OPT},p_i\right)\right\}
+\sum_{i=\ell+1}^{n}\,\max\left\{d\left(\text{OPT},p_\ell\right),\,
d\left(\text{OPT},p_i\right)\right\}\\
&\stackrel{\text{(\ref{orderofincreasingdistances})}}{\le}&
\sum_{i=1}^{\ell-1}\,d\left(\text{OPT},p_\ell\right)
+\sum_{i=\ell+1}^n\,d\left(\text{OPT},p_i\right)\\
&\le&
\sum_{i=1}^{\ell-1}\,d\left(\text{OPT},p_\ell\right)
+\sum_{i=1}^n\,d\left(\text{OPT},p_i\right)\\
&=&
\sum_{i=1}^{\ell-1}\,
\frac{1}{n-\ell+1}\cdot \sum_{j=\ell}^n\,d\left(\text{OPT},p_\ell\right)
+\sum_{i=1}^n\,d\left(\text{OPT},p_i\right)\\
&\stackrel{\text{(\ref{orderofincreasingdistances})}}{\le}&
\sum_{i=1}^{\ell-1}\,
\frac{1}{n-\ell+1}\cdot \sum_{j=\ell}^n\,d\left(\text{OPT},p_j\right)
+\sum_{i=1}^n\,d\left(\text{OPT},p_i\right)\\
&\le&
\sum_{i=1}^{\ell-1}\,
\frac{1}{n-\ell+1}\cdot \sum_{j=1}^n\,d\left(\text{OPT},p_j\right)
+\sum_{i=1}^n\,d\left(\text{OPT},p_i\right)\\
&=&
\frac{\ell-1}{n-\ell+1}\cdot \sum_{i=1}^n\,d\left(\text{OPT},p_i\right)
+\sum_{i=1}^n\,d\left(\text{OPT},p_i\right).
\end{eqnarray*}
\end{proof}

In short, Lemma~\ref{nowthisshouldbethekeylemma}
says that $p_\ell$ is an approximate $1$-median for all small $\ell$.
Below is the key of the proof of Theorem~\ref{greattheoremofIndyk}.

\comment{ 
Pick ${\boldsymbol u}_1$, ${\boldsymbol u}_2$, $\ldots$, ${\boldsymbol u}_h$
independently and uniformly at random from $[n]$, where
$h\stackrel{\text{def.}}{=} 10^9(\log(1/\epsilon))/\epsilon$.
}

\begin{fact}[{\cite[Sec.~6.1]{Ind00}}]\label{Indykkeyfact}
Pick ${\boldsymbol v}_1$, ${\boldsymbol v}_2$, $\ldots$,
${\boldsymbol v}_k$
independently and uniformly at random from $[n]$, where $k\in\mathbb{Z}^+$.
Then for
all $a$, $b\in[n]$ satisfying
$\sum_{j=1}^n\,d(b,p_j)>(1+\epsilon)\,\sum_{j=1}^n\,d(a,p_j)$,
$$
\Pr\left[
\sum_{j=1}^k\,d\left(b,{\boldsymbol v}_j\right)
\le
\sum_{j=1}^k\,d\left(a,{\boldsymbol v}_j\right)
\right]
<\exp{\left(-\frac{\epsilon^2 k}{64}\right)}.
$$
\end{fact}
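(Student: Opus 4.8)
The plan is to prove this as a direct Hoeffding/Chernoff-type concentration estimate on the random variable $\sum_{j=1}^k d(b,\boldsymbol v_j) - \sum_{j=1}^k d(a,\boldsymbol v_j)$. First I would define, for each $j$, the i.i.d.\ random variable $X_j = d(b,\boldsymbol v_j) - d(a,\boldsymbol v_j)$, noting that $\boldsymbol v_j$ is uniform over $[n]$, so $\mathbb{E}[X_j] = \frac1n\sum_{p\in[n]} d(b,p) - \frac1n\sum_{p\in[n]} d(a,p)$, which by the hypothesis $\sum_{j} d(b,p_j) > (1+\epsilon)\sum_j d(a,p_j)$ is at least $\frac{\epsilon}{n}\sum_{p\in[n]} d(a,p)$; write $\mu := \mathbb{E}[X_1] > 0$ for this mean. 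The event in question is $\sum_{j=1}^k X_j \le 0$, i.e.\ $\sum_j X_j \le k\mu - k\mu$, a deviation of at least $k\mu$ below the mean.

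Next I would bound the range of each $X_j$ so as to apply Hoeffding's inequality. By the triangle inequality $|d(b,p) - d(a,p)| \le d(a,b)$ for every $p$, so each $X_j$ lies in an interval of length $2d(a,b)$. Hoeffding then gives
\[
\Pr\left[\sum_{j=1}^k X_j \le k\mu - t\right] \le \exp\!\left(-\frac{2t^2}{k\,(2d(a,b))^2}\right)
\]
for $t \ge 0$; taking $t = k\mu$ yields a bound of $\exp(-k\mu^2/(2d(a,b)^2))$. The remaining task is to show $\mu^2/(2 d(a,b)^2) \ge \epsilon^2/64$, i.e.\ that $\mu \ge \frac{\epsilon}{\sqrt{32}}\, d(a,b)$. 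Since $\mu \ge \frac{\epsilon}{n}\sum_{p} d(a,p)$, it suffices to show $\frac1n\sum_p d(a,p) \ge \frac{1}{\sqrt{32}} d(a,b) \ge \frac18 d(a,b)$, or more crudely that a constant fraction of the points $p$ are at distance at least a constant fraction of $d(a,b)$ from $a$ — this is where one invokes that $b$ itself contributes $d(a,b)$ to the sum, but a single point is not enough, so the real input is the hypothesis: the sum over $p$ of $d(b,p)$ exceeds $(1+\epsilon)$ times the sum of $d(a,p)$, and by triangle inequality $d(b,p) \le d(a,p) + d(a,b)$, giving $(1+\epsilon)\sum_p d(a,p) < \sum_p d(b,p) \le \sum_p d(a,p) + n\,d(a,b)$, hence $\frac1n\sum_p d(a,p) < \frac{1}{\epsilon}d(a,b)$ — unfortunately an upper bound, the wrong direction.

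The main obstacle, therefore, is that the naive Hoeffding scaling of the range by $d(a,b)$ is too lossy when $d(a,b)$ is large compared with the average distances, and one cannot in general lower-bound $\mu$ by a constant times $d(a,b)$. The resolution I expect is to not use $d(a,b)$ as the range at all but instead to exploit the ultrametric-free structure of Indyk's argument: rescale so that each term $d(b,\boldsymbol v_j)$ and $d(a,\boldsymbol v_j)$ is compared in relative terms. Concretely, I would run the concentration argument not on $X_j$ but on the pair $(d(b,\boldsymbol v_j), d(a,\boldsymbol v_j))$ separately, applying a multiplicative Chernoff bound to each of $\sum_j d(a,\boldsymbol v_j)$ and $\sum_j d(b,\boldsymbol v_j)$ around their means $k\bar a$ and $k\bar b$ where $\bar a = \frac1n\sum_p d(a,p)$, $\bar b = \frac1n\sum_p d(b,p) > (1+\epsilon)\bar a$; normalizing the distances by their maximum and using that $\bar b \le (\text{max distance})$, one gets that with probability $\ge 1 - 2\exp(-\Omega(\epsilon^2 k))$ the empirical sum for $a$ is below $k\bar a(1+\epsilon/3)$ and that for $b$ is above $k\bar b(1-\epsilon/3) > k\bar a(1+\epsilon)(1-\epsilon/3) > k\bar a(1+\epsilon/3)$, so the bad event fails; tracking the constants carefully is what produces the $\epsilon^2/64$ in the exponent. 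Since the statement is quoted verbatim from \cite[Sec.~6.1]{Ind00}, I would simply cite Indyk's proof rather than reconstruct these constants in full, and the only genuine work is checking that the constant $64$ is compatible with the scaling chosen.
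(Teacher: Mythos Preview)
The paper does not prove Fact~\ref{Indykkeyfact} at all; it is simply cited as a known result from \cite[Sec.~6.1]{Ind00}. In that sense your final suggestion --- ``simply cite Indyk's proof'' --- is exactly what the paper does, and nothing more is required.

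That said, the ``main obstacle'' you identify in your first (Hoeffding) approach is not real, and it is worth seeing why, since the fix is a one-line use of the triangle inequality that you nearly wrote down. You correctly set $X_j=d(b,\boldsymbol v_j)-d(a,\boldsymbol v_j)\in[-d(a,b),d(a,b)]$ with mean $\mu=\bar b-\bar a>\epsilon\bar a$, and Hoeffding gives $\Pr[\sum X_j\le 0]\le\exp(-k\mu^2/(2d(a,b)^2))$. Your difficulty was lower-bounding $\mu/d(a,b)$. But averaging the triangle inequality $d(a,b)\le d(a,p)+d(p,b)$ over all $p\in[n]$ yields
\[
d(a,b)\;\le\;\bar a+\bar b,
\]
so
\[
\frac{\mu}{d(a,b)}\;\ge\;\frac{\bar b-\bar a}{\bar a+\bar b}\;>\;\frac{\epsilon}{2+\epsilon}\;\ge\;\frac{\epsilon}{3}
\]
(using $\bar b>(1+\epsilon)\bar a$ and, say, $\epsilon\le 1$; the paper works with $\epsilon\le 0.0001$). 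Plugging this in gives $\exp(-k\epsilon^2/18)$, which is stronger than the stated $\exp(-k\epsilon^2/64)$. So your first, direct Hoeffding route already succeeds; there is no need to fall back on separate multiplicative Chernoff bounds for $\sum d(a,\boldsymbol v_j)$ and $\sum d(b,\boldsymbol v_j)$. The inequality you derived, $\bar a<\frac{1}{\epsilon}d(a,b)$, is indeed in the wrong direction, but that is because you tried to lower-bound $\bar a$ alone rather than the ratio $(\bar b-\bar a)/(\bar a+\bar b)$ directly.
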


The
following
lemma
uses
Indyk's~\cite[Sec.~6.1]{Ind00}
technique
that Chang~\cite[Sec.~6]{Cha12} uses to prove
Theorem~\ref{Changconstanttime}.

\begin{lemma}
\label{wemaytakethebestaccordingtothesamples}
Pick ${\boldsymbol v}_1$, ${\boldsymbol v}_2$, $\ldots$,
${\boldsymbol v}_k$ as in Fact~\ref{Indykkeyfact}, where $k=\lceil
10^9(\log(1/\epsilon))/\epsilon^2\rceil$.
Let
$x_1$, $x_2$, $\ldots$, $x_h\in[n]$,
where
$h=\lceil 10^9(\log(1/\epsilon))/\epsilon\rceil$,
and
\begin{eqnarray}
t=\mathop{\mathrm{argmin}}_{i=1}^h\,
\sum_{j=1}^k\,d\left(x_i,{\boldsymbol v}_j\right),
\label{thebestindexaccordingtorandomsamples}
\end{eqnarray}
breaking ties arbitrarily.
Then
$$\Pr\left[\sum_{j=1}^n\,d\left(x_t,p_j\right)
\le \left(1+\epsilon\right)\cdot \min_{i=1}^h\,
\sum_{j=1}^n\,d\left(x_i,p_j\right)\right]>1-\epsilon.$$
\end{lemma}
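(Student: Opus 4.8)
The plan is to reduce the statement to a union bound over the at most $h$ ``bad'' indices $i$ for which $x_i$ is far from optimal, and to control each bad event using Fact~\ref{Indykkeyfact}. First I would let $i^* = \mathop{\mathrm{argmin}}_{i=1}^h \sum_{j=1}^n d(x_i,p_j)$ be the index achieving the best true closeness-centrality among $x_1,\dots,x_h$, and call an index $i$ \emph{bad} if $\sum_{j=1}^n d(x_i,p_j) > (1+\epsilon)\sum_{j=1}^n d(x_{i^*},p_j)$. The event in the lemma fails exactly when the selected index $t$ is bad. If $t$ is bad, then in particular $\sum_{j=1}^k d(x_t,\boldsymbol v_j) \le \sum_{j=1}^k d(x_{i^*},\boldsymbol v_j)$, since $t$ minimizes the sampled sum over all of $[h]$ and $i^*$ is one of the competitors; hence the failure event is contained in $\bigcup_{i \text{ bad}} \{ \sum_{j=1}^k d(x_i,\boldsymbol v_j) \le \sum_{j=1}^k d(x_{i^*},\boldsymbol v_j) \}$.

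Next I would apply Fact~\ref{Indykkeyfact} with $a = x_{i^*}$ and $b = x_i$ for each bad $i$: by definition of ``bad'' the hypothesis $\sum_{j=1}^n d(x_i,p_j) > (1+\epsilon)\sum_{j=1}^n d(x_{i^*},p_j)$ holds, so the probability of the $i$-th event is at most $\exp(-\epsilon^2 k/64)$. There are at most $h$ bad indices, so a union bound gives failure probability at most $h \exp(-\epsilon^2 k/64)$.

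It then remains to check that $h \exp(-\epsilon^2 k/64) \le \epsilon$ for the chosen parameters $h = \lceil 10^9(\log(1/\epsilon))/\epsilon \rceil$ and $k = \lceil 10^9(\log(1/\epsilon))/\epsilon^2 \rceil$. With $k \ge 10^9(\log(1/\epsilon))/\epsilon^2$ we get $\epsilon^2 k/64 \ge (10^9/64)\log(1/\epsilon)$, so $\exp(-\epsilon^2 k/64) \le (1/\epsilon)^{-10^9/64}$, which is an enormously small power of $\epsilon$; meanwhile $h \le 10^9(\log(1/\epsilon))/\epsilon + 1$ is only polynomially large in $1/\epsilon$ (using $\log(1/\epsilon) \le 1/\epsilon$, say, and $\epsilon \le 0.0001$). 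Thus $h \exp(-\epsilon^2 k/64)$ is bounded by, e.g., $(1/\epsilon)^{-10^9/100}$, which is well below $\epsilon$ once $\epsilon \le 0.0001$. This is routine once set up; I would only be mildly careful with the ceilings and with turning $\log(1/\epsilon)$ factors into powers of $1/\epsilon$.

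The main obstacle — really the only subtle point — is the first step: arguing that a bad selected $t$ forces $\sum_{j=1}^k d(x_t,\boldsymbol v_j) \le \sum_{j=1}^k d(x_{i^*},\boldsymbol v_j)$ and hence that the failure event lies inside a union over bad indices of events to which Fact~\ref{Indykkeyfact} directly applies. Note this argument uses only the metric structure (it is Indyk's sampling technique verbatim), so ultrametricity plays no role in this lemma; it is Lemma~\ref{nowthisshouldbethekeylemma} that will later supply, in the ultrametric case, an $x_i$ among the $h$ samples whose true centrality is within $1+\epsilon$ of $r^*$, which combined with this lemma yields the overall $(1+\epsilon)$-approximation.
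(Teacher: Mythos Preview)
Your proposal is correct and follows essentially the same argument as the paper: define $i^*$ as the true minimizer, observe that failure means the selected $t$ is ``bad'' and hence beats $i^*$ on the sampled sum, take a union bound over (bad) indices, apply Fact~\ref{Indykkeyfact} to each term, and verify $h\exp(-\epsilon^2 k/64)<\epsilon$ for the stated $h$ and $k$. The paper's write-up is nearly identical, differing only cosmetically in that it carries the conjunction ``$i$ is bad'' through the union over all $i\in[h]$ rather than restricting the union to bad indices.
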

\begin{proof}
Let
\begin{eqnarray}
i^*&=&\mathop{\mathrm{argmin}}_{i=1}^h\,
\sum_{j=1}^n\,d\left(x_i,p_j\right),\label{thebestfromthesamples}
\end{eqnarray}
breaking ties arbitrarily.
Then
{\footnotesize 
\begin{eqnarray*}
&&\Pr\left[
\sum_{j=1}^n\,
d\left(x_t,p_j\right)> \left(1+\epsilon\right)\cdot \min_{i=1}^h\,
\sum_{j=1}^n\,
d\left(x_i,p_j\right)\right]\\
&\stackrel{\text{(\ref{thebestfromthesamples})}}{=}&\Pr\left[
\sum_{j=1}^n\,
d\left(x_t,p_j\right)>\left(1+\epsilon\right)\cdot
\sum_{j=1}^n\,
d\left(x_{i^*},p_j
\right)\right]\\
&\stackrel{\text{(\ref{thebestindexaccordingtorandomsamples})}}{=}&
\Pr\left[
\left(
\sum_{j=1}^n\,
d\left(x_t,p_j\right)>\left(1+\epsilon\right)\cdot
\sum_{j=1}^n\,
d\left(x_{i^*},p_j\right)
\right)\land\left(
\sum_{j=1}^k\,d\left(x_t,{\boldsymbol v}_j\right)
=
\min_{i=1}^h\,\sum_{j=1}^k\,d\left(x_i,{\boldsymbol v}_j\right)\right)
\right]\\
&\le&
\Pr\left[
\left(
\sum_{j=1}^n\,
d\left(x_t,p_j\right)>\left(1+\epsilon\right)\cdot
\sum_{j=1}^n\,
d\left(x_{i^*},p_j\right)
\right)\land\left(
\sum_{j=1}^k\,d\left(x_t,{\boldsymbol v}_j\right)
\le
\sum_{j=1}^k\,d\left(x_{i^*},{\boldsymbol v}_j\right)\right)
\right]\\
&\le&
\Pr\left[\exists i\in[h],\,
\left(
\sum_{j=1}^n\,
d\left(x_i,p_j\right)>\left(1+\epsilon\right)\cdot
\sum_{j=1}^n\,
d\left(x_{i^*},p_j\right)
\right)\land\left(
\sum_{j=1}^k\,d\left(x_i,{\boldsymbol v}_j\right)
\le
\sum_{j=1}^k\,d\left(x_{i^*},{\boldsymbol v}_j\right)\right)
\right]\\
&\le&
\sum_{i=1}^h\,
\Pr\left[
\left(
\sum_{j=1}^n\,
d\left(x_i,p_j\right)>\left(1+\epsilon\right)\cdot
\sum_{j=1}^n\,
d\left(x_{i^*},p_j\right)
\right)\land\left(
\sum_{j=1}^k\,d\left(x_i,{\boldsymbol v}_j\right)
\le
\sum_{j=1}^k\,d\left(x_{i^*},{\boldsymbol v}_j\right)\right)
\right]\\
&\stackrel{\text{Fact~\ref{Indykkeyfact}}}{<}&
\sum_{i=1}^h\,\exp{\left(-\frac{\epsilon^2 k}{64}\right)}\\
&=&
h\cdot \exp{\left(-\frac{\epsilon^2 k}{64}\right)}\\
&<&\epsilon,
\end{eqnarray*}
}
where the second inequality uses $t\in[h]$.
\end{proof}

In short,
Lemma~\ref{wemaytakethebestaccordingtothesamples}
says how to find a
$((1+\epsilon)\kappa)$-approximate $1$-median
from $\{x_1,x_2,\ldots,x_h\}$
with probability greater than $1-\epsilon$, where $\kappa$ is the best
approximation ratio among
$x_1$, $x_2$, $\ldots$, $x_h$.
Note that computing $t$ in
Eq.~(\ref{thebestindexaccordingtorandomsamples})
requires no knowledge of the ordering $p_1$, $p_2$, $\ldots$, $p_n$.

\begin{figure}
\begin{algorithmic}[1]
\STATE $h\leftarrow \lceil 10^9(\log(1/\epsilon))/\epsilon\rceil$;
\STATE $k\leftarrow \lceil 10^9(\log(1/\epsilon))/\epsilon^2\rceil$;
\STATE Pick ${\boldsymbol u}_1$, ${\boldsymbol u}_2$, $\ldots$, ${\boldsymbol u}_h$,
${\boldsymbol v}_1$, ${\boldsymbol v}_2$, $\ldots$, ${\boldsymbol v}_k$
independently and uniformly at random from $[n]$;
\STATE $t\leftarrow \mathop{\mathrm{argmin}}_{i=1}^h\, \sum_{j=1}^k\,
d({\boldsymbol u}_i,{\boldsymbol v}_j)$, breaking ties arbitrarily;
\RETURN ${\boldsymbol u}_t$;
\end{algorithmic}
\caption{Algorithm {\sf approx.\ median} for {\sc ultrametric $1$-median}}
\label{mainalgorithmfor1median}
\end{figure}

\begin{lemma}\label{mainlemmafor1median}
Algorithm {\sf approx.\ median}
in Fig.~\ref{mainalgorithmfor1median} outputs a
$((1+\epsilon)(1+2\epsilon))$-approximate $1$-median with probability greater than $1-2\epsilon$.
\end{lemma}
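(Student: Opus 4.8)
The plan is to handle the two independent sources of randomness in Algorithm {\sf approx.\ median} separately: the points ${\boldsymbol u}_1,\dots,{\boldsymbol u}_h$ among which the output is chosen, and the points ${\boldsymbol v}_1,\dots,{\boldsymbol v}_k$ used to choose among them. First I would argue that, with probability greater than $1-\epsilon$ over ${\boldsymbol u}_1,\dots,{\boldsymbol u}_h$ alone, at least one of these $h$ points is already a $(1+2\epsilon)$-approximate $1$-median, so that $\min_{i=1}^h\sum_{j=1}^n d({\boldsymbol u}_i,p_j)\le(1+2\epsilon)\sum_{j=1}^n d(\text{OPT},p_j)$. Then I would invoke Lemma~\ref{wemaytakethebestaccordingtothesamples} with $x_i={\boldsymbol u}_i$ to conclude that, conditioned on any fixed outcome of the ${\boldsymbol u}_i$'s, the index $t$ of Eq.~(\ref{thebestindexaccordingtorandomsamples}) satisfies $\sum_{j=1}^n d({\boldsymbol u}_t,p_j)\le(1+\epsilon)\min_{i=1}^h\sum_{j=1}^n d({\boldsymbol u}_i,p_j)$ with probability greater than $1-\epsilon$ over the ${\boldsymbol v}_j$'s. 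Multiplying these probabilities and chaining the two cost bounds produces the factor $(1+\epsilon)(1+2\epsilon)$ with failure probability below $2\epsilon$.

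For the first step, set $m=\lfloor\epsilon n\rfloor$. The standing assumption $\epsilon\ge n^{-2/3}$ gives $\epsilon n\ge n^{1/3}\ge 1$, so $m\ge 1$ and hence $m\ge\epsilon n/2$. A uniformly random point of $[n]$ lies in $\{p_1,\dots,p_m\}$ with probability $m/n\ge\epsilon/2$, so the probability that none of ${\boldsymbol u}_1,\dots,{\boldsymbol u}_h$ does is $(1-m/n)^h\le e^{-hm/n}\le e^{-(10^9(\log(1/\epsilon))/\epsilon)(\epsilon/2)}=\epsilon^{5\cdot 10^8}<\epsilon$ by the definition $h=\lceil 10^9(\log(1/\epsilon))/\epsilon\rceil$. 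When some ${\boldsymbol u}_i=p_\ell$ with $1\le\ell\le m\le\epsilon n$, the quantity $(\ell-1)/(n-\ell+1)$ is nondecreasing in $\ell$ and, evaluated at $\ell=\epsilon n$, is at most $\epsilon/(1-\epsilon)\le 2\epsilon$ since $\epsilon\le 1/2$; so Lemma~\ref{nowthisshouldbethekeylemma} gives $\sum_{j=1}^n d({\boldsymbol u}_i,p_j)\le(1+2\epsilon)\sum_{j=1}^n d(\text{OPT},p_j)$, and therefore $\min_{i=1}^h\sum_{j=1}^n d({\boldsymbol u}_i,p_j)\le(1+2\epsilon)\sum_{j=1}^n d(\text{OPT},p_j)$ on an event of probability greater than $1-\epsilon$.

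For the second step, condition on an arbitrary realization ${\boldsymbol u}_1=x_1,\dots,{\boldsymbol u}_h=x_h$. Because the ${\boldsymbol v}_j$'s are independent of the ${\boldsymbol u}_i$'s and i.i.d.\ uniform on $[n]$, Lemma~\ref{wemaytakethebestaccordingtothesamples} applies with these $x_i$'s and gives, for every such realization, probability greater than $1-\epsilon$ that $\sum_{j=1}^n d({\boldsymbol u}_t,p_j)\le(1+\epsilon)\min_{i=1}^h\sum_{j=1}^n d({\boldsymbol u}_i,p_j)$. Since the event of the first step depends only on the ${\boldsymbol u}_i$'s and has probability greater than $1-\epsilon$, the intersection of the two events has probability greater than $(1-\epsilon)^2>1-2\epsilon$, and on that intersection $\sum_{j=1}^n d({\boldsymbol u}_t,p_j)\le(1+\epsilon)(1+2\epsilon)\sum_{j=1}^n d(\text{OPT},p_j)$, i.e.\ ${\boldsymbol u}_t$ is a $((1+\epsilon)(1+2\epsilon))$-approximate $1$-median.

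The only genuinely new ingredient is the first step — the claim that the random multiset $\{{\boldsymbol u}_1,\dots,{\boldsymbol u}_h\}$ meets the ``good prefix'' $\{p_1,\dots,p_{\lfloor\epsilon n\rfloor}\}$ except with probability less than $\epsilon$ — and this is exactly where the assumption $\epsilon\ge n^{-2/3}$ (which guarantees $\lfloor\epsilon n\rfloor\ge 1$) and the size $h=\Theta((\log(1/\epsilon))/\epsilon)$ enter. The remaining work is a union bound and routine bookkeeping with the independence of the two sample sets and the small-$\epsilon$ inequality $\epsilon/(1-\epsilon)\le 2\epsilon$, so I do not anticipate further difficulty.
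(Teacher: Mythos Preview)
Your proof is correct and follows essentially the same two-step approach as the paper: first argue that $\{{\boldsymbol u}_1,\dots,{\boldsymbol u}_h\}$ hits the prefix of points close to $\text{OPT}$ with probability greater than $1-\epsilon$ (so some ${\boldsymbol u}_i$ is a $(1+2\epsilon)$-approximate $1$-median via Lemma~\ref{nowthisshouldbethekeylemma}), then apply Lemma~\ref{wemaytakethebestaccordingtothesamples} to lose only an extra $(1+\epsilon)$ factor with further failure probability below $\epsilon$. The only cosmetic difference is that the paper uses the prefix $\{p_1,\dots,p_{\lceil\epsilon n\rceil}\}$, which is automatically nonempty and makes the hitting calculation a touch cleaner, whereas you use $\{p_1,\dots,p_{\lfloor\epsilon n\rfloor}\}$ and invoke $\epsilon\ge n^{-2/3}$ to ensure $\lfloor\epsilon n\rfloor\ge 1$; both routes yield the same bounds.
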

\begin{proof}
With $h$
and
${\boldsymbol u}_1$, ${\boldsymbol u}_2$, $\ldots$,
${\boldsymbol u}_h$
as in
{\sf approx.\ median},
\begin{eqnarray}
&&\Pr\left[\exists
i\in[h],
\, {\boldsymbol u}_i\in
\left\{p_1,p_2,\ldots,p_{\lceil\epsilon n\rceil}\right\}\right]\label{probabilityofpickinggoodfirst}\\
&=&1-\Pr\left[\forall
i\in[h],
\, {\boldsymbol u}_i\notin
\left\{p_1,p_2,\ldots,p_{\lceil\epsilon n\rceil}\right\}\right]\nonumber\\
&=&1-\left(1-\frac{\lceil\epsilon n\rceil}{n}\right)^h\nonumber\\
&>&1-\epsilon.\label{probabilityofpickinggoodlast}
\end{eqnarray}
When there exists $1\le i\le h$ satisfying
${\boldsymbol u}_i\in
\{p_1,p_2,\ldots,p_{\lceil\epsilon n\rceil}\}$,
Lemma~\ref{nowthisshouldbethekeylemma} asserts the existence
of a $(1+2\epsilon)$-approximate $1$-median in
$\{{\boldsymbol u}_1,{\boldsymbol u}_2,\ldots,{\boldsymbol u}_h\}$.
So
Eqs.~(\ref{probabilityofpickinggoodfirst})--(\ref{probabilityofpickinggoodlast})
force
$\{{\boldsymbol u}_1,{\boldsymbol u}_2,\ldots,{\boldsymbol u}_h\}$
to contain a $(1+2\epsilon)$-approximate $1$-median with probability
greater than $1-\epsilon$.
By
Lemma~\ref{wemaytakethebestaccordingtothesamples} (with $\{x_i\}_{i=1}^h$
substituted by $\{{\boldsymbol u}_i\}_{i=1}^h$),
{\sf approx.\ median} outputs a $((1+\epsilon)\kappa)$-approximate $1$-median
with probability greater than $1-\epsilon$ if
$\{{\boldsymbol u}_1,{\boldsymbol u}_2,\ldots,{\boldsymbol u}_h\}$
contains a $\kappa$-approximate $1$-median,
for all $\kappa>0$.
Now take
$\kappa=1+2\epsilon$.
\end{proof}

\begin{theorem}\label{maintheorem}
{\sc Ultrametric $1$-median} has a Monte Carlo
$O((\log^2(1/\epsilon))/\epsilon^3)$-time
$(1+\epsilon)$-approximation algorithm
with success probability greater than $1-\epsilon$.
\end{theorem}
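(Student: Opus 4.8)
The plan is to observe that Lemma~\ref{mainlemmafor1median} already does almost all the work, and that only two routine adjustments remain: rescaling $\epsilon$ to absorb the quadratic loss in the approximation ratio, and verifying the running time. First I would set $\epsilon' = c\epsilon$ for a suitable small constant $c$ (any $c \le 1/5$ works, using $\epsilon \le 0.0001$) and run algorithm {\sf approx.\ median} with parameter $\epsilon'$ in place of $\epsilon$. By Lemma~\ref{mainlemmafor1median} the output is then a $((1+\epsilon')(1+2\epsilon'))$-approximate $1$-median with probability greater than $1-2\epsilon'$. Since $(1+\epsilon')(1+2\epsilon') = 1 + 3\epsilon' + 2\epsilon'^2 \le 1 + 5\epsilon' = 1 + 5c\epsilon \le 1+\epsilon$ for $c \le 1/5$, and $1 - 2\epsilon' = 1 - 2c\epsilon \ge 1 - \epsilon$, both the approximation ratio and the success probability come out as required. (One should double-check the boundary case: the excerpt reduces to $\epsilon \le 0.0001$, so $\epsilon'$ stays well within the regime where all the $\lceil\cdot\rceil$ estimates and the bound $h(1-\lceil\epsilon' n\rceil/n)^h$ behave as claimed; this is the content of the W.L.O.G.\ assumptions already in place.)

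Next I would bound the running time. Algorithm {\sf approx.\ median} draws $h + k = O((\log(1/\epsilon'))/\epsilon') + O((\log(1/\epsilon'))/\epsilon'^2) = O((\log(1/\epsilon))/\epsilon^2)$ sample points, then in line~4 computes, for each of the $h$ points ${\boldsymbol u}_i$, the sum $\sum_{j=1}^k d({\boldsymbol u}_i, {\boldsymbol v}_j)$ of $k$ distances, and takes the minimum. That is $h \cdot k = O\!\big((\log(1/\epsilon)/\epsilon)\cdot(\log(1/\epsilon)/\epsilon^2)\big) = O((\log^2(1/\epsilon))/\epsilon^3)$ distance evaluations plus the same order of arithmetic, which dominates the $O(h+k)$ sampling cost. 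Hence the total time is $O((\log^2(1/\epsilon))/\epsilon^3)$, independent of $n$, exactly as claimed. The assumption $\epsilon \ge n^{-2/3}$ is only needed to guarantee this is genuinely faster than the $\Theta(n^2)$ brute force, so nothing further is required there.

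I do not anticipate a real obstacle: the theorem is essentially a bookkeeping wrapper around Lemma~\ref{mainlemmafor1median}. The only point demanding a little care is making sure the constant $c$ used in the rescaling is compatible with \emph{all} the hidden constants — in particular that shrinking $\epsilon$ to $\epsilon'$ does not violate $\epsilon' \ge n^{-2/3}$; but since we may also assume $n$ is large (otherwise brute force already runs in $O((\log^2(1/\epsilon))/\epsilon^3)$ time once $\epsilon \ge n^{-2/3}$ forces $n = O(\epsilon^{-3/2})$), this too is immediate. So the proof is just: rescale $\epsilon$ by a constant, invoke Lemma~\ref{mainlemmafor1median}, and read off the time bound $h k = O((\log^2(1/\epsilon))/\epsilon^3)$.
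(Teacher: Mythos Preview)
Your proposal is correct and follows exactly the paper's approach: the paper's proof is the one-liner ``invoke Lemma~\ref{mainlemmafor1median} (with $\epsilon$ substituted by $\epsilon/4$) and calculate the running time of {\sf approx.\ median},'' which is precisely your rescale-and-read-off argument with the constant $c=1/4$ instead of $1/5$. Your write-up is in fact more detailed than the paper's, spelling out the $hk$ running-time computation and the verification that $(1+\epsilon')(1+2\epsilon')\le 1+\epsilon$ and $1-2\epsilon'\ge 1-\epsilon$.
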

\begin{proof}
Invoke
Lemma~\ref{mainlemmafor1median} (with $\epsilon$ substituted by $\epsilon/4$)
and calculate the running time of {\sf approx.\ median}.
\end{proof}


\bibliographystyle{plain}
\bibliography{ultrametric_1_median}

\noindent

\end{document}